\newcommand{\R}{\mathbb{R}}
\newcommand{\Rp}{\mathbb{R}_{\geq 0}}
\newcommand{\ind}{\mathds{1}}
\newcommand{\up}[1]{^{({#1})}}
\newcommand{\ip}[1]{\langle{#1}\rangle}
\newcommand{\bb}[1]{\mathbf{{#1}}}
\newcommand{\bu}{\mathbf{u}}
\newcommand{\bx}{\mathbf{x}}
\newcommand{\by}{\mathbf{y}}
\newcommand{\bz}{\mathbf{z}}
\newtheorem{theorem}{Theorem}[section]
\newtheorem{lemma}[theorem]{Lemma}
\newtheorem{proposition}[theorem]{Proposition}
\newtheorem{definition}[theorem]{Definition}
\title{Online Allocation with Concave, Diminishing-Returns Objectives}
\author{Kalen Patton\thanks{
        (kpatton33@gatech.edu)
        School of Mathematics,
        Georgia Tech.
        Supported in part by NSF awards CCF-2327010 and CCF-2440113.
        }}
\begin{document}
\maketitle
\begin{abstract}
Online resource allocation problems are central challenges in economics and computer science, modeling situations in which $n$ items arriving one at a time must each be immediately allocated among $m$ agents. In such problems, our objective is to maximize a monotone reward function $f(\mathbf{x})$ over the allocation vector $\mathbf{x} = (x_{ij})_{i, j}$, which describes the amount of each item given to each agent. In settings where $f$ is concave and has ``diminishing returns'' (monotone decreasing gradient), several lines of work over the past two decades have had great success designing constant-competitive algorithms, including the foundational work of Mehta et al. (2005) on the Adwords problem and many follow-ups. Notably, while a greedy algorithm is $\frac{1}{2}$-competitive in such settings, these works have shown that one can often obtain a competitive ratio of $1-\frac{1}{e} \approx 0.632$ in a variety of settings when items are divisible (i.e. allowing fractional allocations). However, prior works have thus far used a variety of problem-specific techniques, leaving open the general question: \emph{Does a $(1-\frac{1}{e})$-competitive fractional algorithm always exist for online resource allocation problems with concave, diminishing-returns objectives?}

In this work, we answer this question affirmatively, thereby unifying and generalizing prior results for special cases. Our algorithm is one which makes continuous greedy allocations with respect to an auxiliary objective $U(x)$. Using the online primal-dual method, we show that if $U$ satisfies a ``balanced'' property with respect to $f$, then one can bound the competitiveness of such an algorithm. Our crucial observation is that there is a simple expression for $U$ which has this balanced property for any $f$, yielding our general $(1-\frac{1}{e})$-competitive algorithm.

\end{abstract}

\newpage
\section{Introduction}
Online resource allocation problems are central challenges in computer science and economics, and as a consequence, they have received considerable attention over the past few decades. For such problems, a sequence of items $j \in [m]$ (e.g. ads, goods, jobs, etc.) arrives online, and our algorithm must immediately allocate each item among a set of offline agents $[n]$. Each agent $i \in [n]$ has a valuation function $v_i$, and receives utility $u_i = v_i((x_{ij})_{j \in [m]})$, where $x_{ij}$ denotes the amount of item $j$ allocated to $i$. The goal of the algorithm is to maximize some aggregate function $W(u_1, \dots, u_n)$ over the vector of buyers' utilities (e.g. $\sum_i u_i$, Nash welfare $\prod_i u_i^{1/n}$, etc.). The performance of such an algorithm is measured by the \emph{competitive ratio}, i.e. the ratio of the algorithm's expected welfare to the hindsight optimum. 

In this work, we consider such problems in which the objective $f(\bx) := W(v_1(\bx_1), \dots, v_n(\bx_n))$ is (1) concave, and (2) has diminishing returns (i.e. monotone decreasing gradients) as a function of the allocation vector $\bx$. These are natural properties which arise frequently in resource allocation settings, such as in \cite{mehta2005adwords,devanur2012online,devanur2013whole,wang2016matroid,zhang2020combinatorial,hathcock2024online}. It is not difficult to show that any online problem with these properties admits a $\frac{1}{2}$-competitive greedy algorithm. Hence, the fundamental question in such settings is if one can obtain a competitive ratio $> \frac{1}{2}$.

One of the most famous problems in this framework is the Adwords problem studied by Mehta, Saberi, Vazirani, and Vazirani \cite{mehta2005adwords}. In the Adwords problem, ads must be allocated online to agents with budget-additive valuation functions $v_i(\bx_i) = \min\{\sum_{j} b_{ij}x_{ij},~B_i\}$ in order to maximize the sum of buyer utilities $W(u) = \sum_i u_i$. The key result of Mehta et al. is that one can obtain a competitive ratio $1-\frac{1}{e} \approx 0.632$ when buyer budgets $B_i$ are large compared to the bids $b_{ij}$, or when items are divisible (i.e. allowing fractional $x_{ij}$). In recent years, numerous works have extended this $(1-\frac{1}{e})$-competitive algorithm to a variety of fractional or ``small-bids'' settings. Some of these lines of work include the following:
\begin{enumerate}
    \item \textbf{Non-uniform Item Weights.} Feldman, Korula, Mirrokni, and P\'al \cite{feldman2009online} considered a weighted setting where items have reward values $w_{ij}$ distinct from the budget they consume $b_{ij}$. They show that one can still obtain a $(1 - \frac{1}{e})$-competitive algorithm with the addition of a ``free disposal'' assumption, which is equivalent to maximizing the sum of utilities $u_i = \max\{ \sum_j w_{ij} z_{ij} : \sum_j b_{ij} z_{ij} \leq B_i,~\bz \leq \bx\}$.
    \item \textbf{Concave Valuations.} Devanur and Jain \cite{devanur2012online} considered a setting in which agents have $u_i = M_i(\sum_j b_{ij} x_{ij})$ for a monotone concave $M_i : \Rp \to \Rp$. They showed that each such concave function $M$ admits an optimal competitive ratio $F(M)$, where one always has $F(M) \geq 1 - \frac{1}{e}$.
    \item \textbf{Simultaneous Rewards.} Devanur, Huang, Korula, and Mirrokni \cite{devanur2013whole} studied a variation of Adwords in which ``pages'' of ads arrive online, each with a set of possible configurations in which it can be allocated. Each configuration can give a reward to multiple agents simultaneously. The author show that, even when allocations give rewards to multiple agents, a version of the algorithm of \cite{mehta2005adwords} gives a $1-\frac{1}{e}$ competitive ratio.
    \item \textbf{Combinatorial Budgets.} A series of works \cite{wang2016matroid, zhang2020combinatorial, hathcock2024online} have considered the setting when agents' utilities are capped with a polymatroid constraint, i.e. $W(\bu) = \max\{\sum_i \overline u_i : \overline \bu \in P,~ \overline \bu \leq \bu\}$ for a polymatroid $P$. The most recent of these, \cite{hathcock2024online}, uses a principal partitioning of the polymatroid to show that a $(1 -\frac{1}{e})$-competitive algorithm exists for a wide range of such settings.
\end{enumerate}

From these works, we can see that the $(1-\frac{1}{e})$-competitive algorithm from \cite{mehta2005adwords} can be broadly extended to online resource allocation problems satisfying the two properties mentioned above. However, we note that these prior works are not comprehensive, and in fact the settings of \cite{devanur2012online}, \cite{devanur2013whole}, and \cite{hathcock2024online} are largely orthogonal to each other. Moreover, each of these lines of work uses problem-specific techniques to get their results. These observations invite the natural question: 
\begin{quote}
    Is there always $(1-\frac{1}{e})$-competitive fractional algorithm for any online resource allocation problem with concave, diminishing-returns objective?
\end{quote}
We answer this question affirmatively, giving a general $(1 - \frac{1}{e})$-competitive algorithm for any such online resource allocation problem. In doing so, we generalize the corresponding results of \cite{feldman2009online, devanur2012online, devanur2013whole, hathcock2024online} and unify them under a common framework.

\subsection{Our Results}\label{sec:results}
Formally, we give a $(1-\frac{1}{e})$-competitive algorithm for the following general problem, which we call \emph{Online Concave Diminishing-Returns Allocation (OCDRA)}. This problem is defined in the abstract as to encompass as many settings as possible.

\begin{definition}\label{def:ocdra}
    In an instance of \emph{online concave diminishing-returns allocation (OCDRA)}, we have a set $[n]$ of divisible items which arrive online one by one. When item $j \in [n]$ arrives, it reveals a set $A_j$ of possible allocation options (e.g. buyers, advertisers, configurations etc.). The algorithm must irrevocably choose some combination of these options, given by a vector $(x_a)_{a \in A_j}$ where $x_a \geq 0$ denotes the amount of item $i$ allocated through option $a$, and $\sum_{a \in A_j} x_a \leq 1$. We assume the sets $\{A_j\}_{j \in [n]}$ to be disjoint.

    The objective is to the maximize $f(x)$, where $f$ is a function such that
    \begin{enumerate}
        \item $f : \Rp^{A} \to \Rp$ for $A = \bigcup_j A_j$,
        \item $f(0) = 0$ and $f$ is concave, monotone increasing.
        \item $f$ is upward-differentiable\footnote{All multivariate functions $f : \Rp^m \to \Rp$ we consider will be assumed to be \emph{upward-differentiable}, meaning that for every $\bx \in \Rp^m$ there exists an \emph{upward-gradient} $\nabla f(\bx) \in \R^m$ such that $\lim_{\epsilon \to 0^+} (f(\bx + \epsilon \by) - f(\bx))/\epsilon = \ip{\nabla f(\bx), \by}$ for every $\by \in \Rp^m$.} and $\nabla f$ is monotone decreasing coordinate-wise (diminishing returns).
    \end{enumerate}
    We also assume that upon the $j$th arrival, the algorithm only has knowledge of $f$ restricted\footnote{Formally, we define the function $f$ restricted to $S \subseteq A$ by $x \mapsto f(x, \mathbf{0}_{A \setminus S})$ for $x \in \Rp^S$, i.e., we zero-out all inputs to $f$ at coordinates in $A \setminus S$. In other words, we assume the algorithm has no information on the dependence of $f$ on coordinates other than those it has seen.} to $A_1 \cup \dots \cup A_j$. 
\end{definition}

\begin{theorem}\label{thm:main}
    There exists a $(1-\frac{1}{e})$-competitive algorithm for online concave diminishing-returns allocation.
\end{theorem}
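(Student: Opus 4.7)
The plan is to design a fractional algorithm that, upon arrival of item $j$, allocates $(x_a)_{a \in A_j}$ by continuous greedy ascent on an auxiliary concave potential $U(x)$ rather than directly on $f$, and to analyze it via the online primal-dual method. The role of $U$ is to introduce a rising ``shadow price'' as options in each $A_j$ saturate, exactly as the exponential penalty does for budgets in the Adwords analysis of \cite{mehta2005adwords}, but now at the level of the aggregate objective rather than per-buyer. I would first write down the natural primal-dual pair: the primal is $\max f(x)$ subject to $\sum_{a \in A_j} x_a \leq 1$ for every $j$, and dualizing the simplex constraints with nonnegative prices $\alpha_j$ together with the concave Fenchel conjugate $f^*$ yields the Lagrangian dual $\min_{\alpha \geq 0}\, f^*(\alpha') + \sum_j \alpha_j$, where $\alpha'_a := \alpha_j$ for $a \in A_j$. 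The algorithm will maintain $x$ and $(\alpha_j)$ simultaneously, targeting the per-arrival invariant $\Delta f(x) \geq (1-\tfrac{1}{e})\,\Delta D(\alpha)$; summing this over arrivals and applying weak duality will then give Theorem~\ref{thm:main}.

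Next I would identify the ``balanced'' inequality linking $U$ to $f$ that drives the analysis. Motivated by the exponential discount recurring across the prior special cases, a natural simple candidate is an exponentially smoothed gradient-integral of the form
\[
U(x) \;:=\; \frac{1}{1-e^{-1}}\int_0^1 e^{t-1}\,\bigl\langle \nabla f(tx),\, x\bigr\rangle\, dt,
\]
or a closely related smoothing of $f$, chosen so that $\nabla U(x)$ is an exponentially weighted average of $\nabla f$ along the ray from $\mathbf{0}$ to $x$. The balanced property I would target, abstracting what the primal-dual bookkeeping needs, is roughly of the form
\[
\bigl\langle \nabla U(x),\, y\bigr\rangle \;\geq\; \bigl\langle \nabla f(x),\, y\bigr\rangle + (1-\tfrac{1}{e})\cdot\bigl[\text{dual update contribution}\bigr],
\]
meaning that a greedy step in $U$ along a feasible direction $y$ supported in some $A_j$ simultaneously accrues enough primal value through $f$ and enough dual slack through $f^*$ to keep the invariant at ratio $1-\tfrac{1}{e}$. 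I would verify this using only the two structural hypotheses: concavity (yielding subgradient/integral comparisons between $f(y)$, $f(x)$, and $f(tx)$) and coordinate-wise decreasing $\nabla f$ (letting me bound $\nabla f$ at a larger argument by $\nabla f$ at a smaller one, so that the exponentially weighted $\nabla U$ dominates $\nabla f$ in the relevant direction).

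Given the balanced inequality, the algorithm and its analysis are mechanical: on arrival of $j$, put the unit of item $j$ on $\arg\max_{a \in A_j}\,[\nabla U(x)]_a$, set $\alpha_j$ equal to that maximum, use the balanced inequality to discharge the invariant at step $j$, and sum. The main obstacle is isolating the right simple expression for $U$ and establishing the balanced inequality in full generality: because $f$ is supplied only as an abstract concave, monotone, diminishing-returns function that may couple options across different items in arbitrarily complex ways, the argument must avoid every problem-specific feature (separability across buyers, linearity, polymatroid structure, free-disposal, etc.) that underlies the earlier special cases of \cite{feldman2009online,devanur2012online,devanur2013whole,hathcock2024online}. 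I expect the balanced inequality to reduce, after substituting the integral form of $U$ and using monotonicity of $\nabla f$, to a one-dimensional calculus inequality in the integration parameter $t$ that does not depend on $f$, but pinning down the correct $U$ and the correct reduction is the crux of the proof.
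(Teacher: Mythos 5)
Your overall architecture matches the paper's: continuous greedy on an auxiliary potential $U$, an online primal-dual analysis with the concave Fenchel conjugate, weak duality, and a final reduction to a one-dimensional calculus inequality. But the crux of the proof --- which you correctly identify as ``pinning down the correct $U$'' --- is exactly where your proposal goes wrong. Your candidate $U(\bx) = \frac{1}{1-e^{-1}}\int_0^1 e^{t-1}\ip{\nabla f(t\bx),\,\bx}\,dt$ averages $\nabla f$ along the segment from $\mathbf{0}$ to $\bx$, i.e.\ at \emph{less} saturated points, whereas the paper's function $U(\bx) = \frac{1}{e-1}\int_0^1 e^t\, t f(\bx/t)\,dt$ has $\nabla U(\bx) = \frac{1}{e-1}\int_0^1 e^t \nabla f(\bx/t)\,dt$, an average of $\nabla f$ at the \emph{scaled-up} points $\bx/t \geq \bx$, i.e.\ at hypothetical future states. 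The direction matters: for a single budget-additive agent with $f(\by)=\min\{\sum_j b_j y_j, B\}$, your $U$ evaluates to exactly $\sum_j b_j x_j$ whenever the budget is not yet exhausted, so greedy on your $U$ degenerates to plain greedy on the bids, which is only $\frac12$-competitive for Adwords. The paper's $U$ instead yields the discount factor $\frac{1-e^{T_i-1}}{1-e^{-1}}$ in terms of the spent fraction $T_i$, recovering the MSVV trade-off. Relatedly, your target inequality $\ip{\nabla U(\bx),\by} \geq \ip{\nabla f(\bx),\by} + \cdots$ has the wrong sign for the correct $U$: since $\nabla f$ is decreasing and $\bx/t \geq \bx$, the correct $U$ satisfies $\nabla U(\bx) \leq \nabla f(\bx)$ coordinatewise. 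The paper's balanced condition is instead the global pointwise bound $\frac{1}{\gamma} f(\bx) \geq U(\bx) + \hat f(\nabla U(\bx))$ with $\hat f(\alpha) = \sup_{\bz}(f(\bz)-\ip{\alpha,\bz})$, verified for $\gamma = 1-\frac1e$ via convexity of $\hat f$, the identity $\hat f(\nabla f(\by)) = f(\by) - \ip{\nabla f(\by),\by}$, and the substitution $u = 1/t$.

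Two further details in your bookkeeping would also need repair. First, your dual forces a single price $\alpha_j$ on all of $A_j$, but the analysis needs the conjugate evaluated at the full vector $\nabla U(\bx)$, which is not constant on $A_j$; the paper keeps separate variables $\alpha_a = \frac{\partial U}{\partial x_a}(\bx)$ and $\beta_j \geq \max_{a \in A_j}\alpha_a$. Second, you allocate item $j$ in one shot to the initial argmax of $\nabla U$, but the proof requires allocating continuously within each arrival: this is what makes $\beta_j$ equal exactly the increase in $U$ during arrival $j$ (so that $\sum_j \beta_j = U(\bx)$, which is what the balanced inequality controls) while simultaneously guaranteeing dual feasibility, since the running maximum of $\frac{\partial U}{\partial x_a}$ over the arrival dominates its final value because $\nabla U$ is decreasing.
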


We recover the traditional online resource allocation model from this setting when $A = [m] \times [n]$ and $A_j = \{\{i,j\} : i \in [m]\}$, i.e. $A_j$ contains the set of edges from item $j$ to any offline agent $i \in [n]$. However, allowing arbitrary $A_j$ allows us to easily generalize the settings of \cite{devanur2013whole} (where $A_j$ represents the set of configurations of page $j$) and \cite{hathcock2024online} (where $A_j$ represents the part $Q_j$ arriving at step $j$).

Moreover, due to the generality of the setting we consider, \Cref{thm:main} also implies new results beyond those covered by prior work. For instance, we obtain results for the following settings.

\begin{itemize}
    \item \textbf{Combinations of Prior Settings.} A major strength of our approach is that, by studying such a broad class of objective functions, we can prove results for settings which combine the lines of work above. For instance, suppose we have an online resource allocation problem in which agents each have concave valuation functions $v_i(\bx_i) = M_i(\sum_j b_{ij} x_{ij})$ as in \cite{devanur2012online}, but we also have a polymatroid cap on the welfare given by $W(\bu) = \max\{\sum_i \overline u_i : \overline \bu \in P,~ \overline \bu \leq \bu\}$ as in \cite{hathcock2024online}. Although previously these two works were incompatible, \Cref{thm:main} now implies a $1 - \frac{1}{e}$ competitive ratio in the joint setting. This comes from the observation that the set of functions $f$ satisfying the conditions of \Cref{def:ocdra} are closed under operations including positive linear combinations and composition, which we show in \Cref{lem:cdr-props}.
    
    \item \textbf{Beyond $p$-mean Online Welfare Maximization.} In the online $p$-mean welfare problem introduced by \cite{barman2022universal}, we consider an online resource allocation problem in which agents have linear valuations $u_i = v_i(\bx_i) = \sum_j b_{ij} x_{ij}$, and the objective is to maximize the $p$-mean welfare $W(\bu) = (\frac{1}{n}\sum_i u_i^p)^{1/p}$, where $p \in [-\infty,~1)$. Here, the $p$-mean objective is used to capture a notion of fairness among agent utilities. Recently, the work of \cite{huang2025long} settled the optimal competitive ratio of online $p$-mean welfare maximization for all $p \leq 1$ up to lower-order terms.

    However, a separate line of work has initiated the study of ``beyond $\ell_p$'' objectives for optimization problems (e.g. \cite{azar2016online, kesselheim2023online, kesselheim2024supermodular}). In these works, optimization problems which historically have been examined with an $\ell_p$ norm objective for $p \geq 1$ are instead considered with an arbitrary convex objectives, which can capture more complex notions of fairness. Following the same paradigm in the concave $p\leq 1$ regime, we may consider online welfare maximization with general concave $W(\bu)$. Our results imply that, when $W(\bu)$ has monotone gradients (a common assumption in beyond $\ell_p$ analysis, as in \cite{azar2016online}), we obtain a $(1 - \frac{1}{e})$ competitive algorithm. This holds even when agents' valuations $v_i$ are not just linear, but are any valuation which is monotone, concave, and has diminishing returns.

    \item \textbf{Adwords with Convex Budget Constraints.} The works of \cite{wang2016matroid,zhang2020combinatorial,hathcock2024online} examined, among other things, the Adwords problem with more general budget constraints, i.e. valuation functions of the form $v_i(\bx_i) := \max\{\sum_j b_{ij}z_{ij} : \bz \leq \bx,~ (b_{ij} z_{ij})_{j \in [m]} \in K_i\}$, where $K_i \subseteq \Rp^m$ is convex, downward-closed\footnote{We say $K \subseteq \Rp^m$ is \emph{downward-closed} if for any $x \in K$ and $z \leq x$, we have $z \in K$.} set. In other words, the buyer receives linear rewards given by values $b_{ij}$, but is limited to a ``feasible region'' defined by $K_i$. When $K_i = \{\bx \in \Rp^m : \sum_j x_j \leq B_j\}$, we recover the setting of \cite{mehta2005adwords}, but more complex sets $K_i$ can capture more complex constraints, such as constraints due to network traffic or tier budgets \cite{hathcock2024online}.
    
    The aforementioned works imply $(1 - \frac{1}{e})$-competitive algorithms in the case where each $K_i$ is a polymatroid. In comparison, our work implies a $(1 - \frac{1}{e})$-competitive algorithm whenever the above valuations $v_i$ have diminishing returns. This is true when $K_i$ is a polymatroid, but more generally, when the norm $\|\cdot\|_{K_i}$ defined by $\|\by\|_{K_i} = \sup_{\alpha \in K_i} \ip{\alpha, \by}$ is a ``submodular norm,'' as in \cite{patton2023submodular}. This is a broad class which captures many common norms, $\ell_p$-norms, Top-$k$ norms, and Lov\'asz extensions of submodular functions \cite{bach2019submodular}.
\end{itemize}

\subsection{Techniques and Contributions}
To obtain our results, we use a greedy algorithm with respect to an auxiliary value function $U(x)$, along with a primal-dual analysis to bound the competitive ratio. This approach alone is not new, as the online primal-dual method is featured quite often in the study of online resource allocation. Instead, our main new contributions are as follows: First, we identify a key sufficient condition on the function $U(x)$ for our algorithm to obtain a given competitive ratio $\gamma \in [0,1]$ for our OCDRA setting. Second, we show that for $\gamma = 1-\frac{1}{e}$, a function satisfying this condition for \textit{any} problem instance is given by
\begin{equation}\label{eq:u-def}
U(\bx) := \frac{1}{e-1}\int_0^1 e^t \cdot t f(\bx/t) dt.
\end{equation}
We stress that designing online resource allocation algorithms via the above auxiliary value function is a novel perspective that has not appeared in prior work, and it is this new perspective which allows us to obtain our general and unified results.

\paragraph{Algorithm} Our algorithm for OCDRA can be formally described by the following ``continuous greedy'' algorithm with the function $U$ given in \eqref{eq:u-def}. Note that the expression of $U$ in \eqref{eq:u-def} allows us to compute $\nabla U(\bx)$ with only the partial knowledge of $f$ restricted to $A_1 \cup \dots \cup A_j$, as required by the problem setting.
\begin{algorithm}[h]
    \caption{Continuous Greedy with Respect to $U(\bx)$}
    \label{alg:con-greedy}
    \begin{enumerate}
        \item 
        When each item $j$ arrives and reveals $A_j$:
        \begin{enumerate}
            \item Initialize $x_a = 0$ for all $a \in A_j$.
            \item Over time interval $t \in [0,1]$, continuously choose $a_t \in \arg \max_{a \in A_j} \frac{\partial U(\bx)}{\partial x_a}$ and increase $x_{a_t}$ by $dt$.
        \end{enumerate}
    \item Return $\bx$.
    \end{enumerate}
\end{algorithm}

To get intuition for the formula \eqref{eq:u-def}, we can compare our algorithm to those used by prior works. Recall that a simple continuous greedy algorithm (i.e. \Cref{alg:con-greedy} with $U(\bx) = f(\bx)$) is $\frac{1}{2}$-competitive. To do better than $\frac{1}{2}$, a common theme among prior works is that the algorithm should consider not only the current marginal reward, but potential future rewards as well. In the setting of \cite{mehta2005adwords}, this idea manifests in the algorithm giving more weight to agents with greater remaining budgets, i.e. preferring to allocate to agents with larger potential for rewards in the future. 

Here, we implement this idea by considering not just the immediate marginal rewards described by $\nabla f(\bx)$, but also possible ``future'' rewards given by $\nabla f(\by)$ for values $\by > \bx$. Notice that $e-1 = \int_0^1 e^t dt$, and $\nabla U(\bx) = \frac{1}{e-1}\int_0^1 e^t \cdot \nabla f(\bx/t) dt$. In other words, the gradient of $U$ is a convex combination of the gradient of $f$ at points $\by = \bx/t$ for $t \in (0,1]$. By continuously increasing $\bx$ in coordinates which maximize $\nabla U$, the algorithm intuitively seeks to maximize a mixture of both the current reward and potential future rewards. The weighting of this convex combination is carefully chosen to obtain the optimal competitive ratio through our analysis.

\paragraph{Analysis} To analyze the competitive ratio of the algorithm, we will write a primal and dual program for an instance of OCDRA using Fenchel duality (we refer the reader to \cite{boyd2004convex} for additional background on Fenchel duality, although it is not required for our discussion). We show that if $U$ has a certain ``$\gamma$-balanced'' growth property for some $\gamma \in [0,1]$ (formally defined in \Cref{def:good}), then we can update our dual variables continuously online in a way that ensures the primal objective is always at least a $\gamma$ fraction of dual objective. Using weak duality and feasibility of our primal and dual solutions, we obtain the that competitive ratio of the algorithm is at least $\gamma$. 

The challenge in this approach lies in finding a function $U$ which is $\gamma$-balanced. We note that in the setting of \cite{devanur2012online}, the authors are able to write a differential equation in order to find a function satisfying a similar property for 1-dimensional functions. However, as our objectives are high-dimensional concave functions, it is unclear if one can generalize this approach. Our key insight is that we instead observe that the expression of $U$ in terms of $f$ given in \eqref{eq:u-def} already obtains the desired property for any $f$ with $\gamma = 1 - \frac{1}{e}$. Moreover, we know this is the best possible for general $f$ due to hard instances of Adwords \cite{mehta2005adwords}. We leave as an open question if one can obtain $\gamma > 1 - \frac{1}{e}$ for special cases of $f$.

\subsection{Further Related Work}

\paragraph{Resource Allocation with Indivisible Items} For our settings, we consider divisible items, i.e. we allow $\bx$ to be fractional. However, online resource allocation problems with \emph{indivisible items} (i.e. integral $\bx$) have also been extensively studied. In particular, we discuss work on settings without a ``small bids'' assumption as seen in \cite{mehta2005adwords, devanur2013whole, hathcock2024online}, which informally allows one to argue that the indivisible item setting is ``close'' to the setting with divisible items.

Work on such online resource allocation problems began with the foundational work of Karp, Vazirani, and Vazirani \cite{karp1990optimal} on the problem of online bipartite matching. Here, the authors prove an optimal $(1 - \frac{1}{e})$-competitive ratio using the RANKING algorithm. This algorithm was also later extended to obtain the optimal $1 - \frac{1}{e}$ competitive ratio for online vertex-weighted matching \cite{aggarwal2011online}, online bipartite $b$-matching \cite{albers2021optimal}, and online submodular welfare with matroid-rank valuation function \cite{hathcock2024online}.

However, reaching a $1 - \frac{1}{e}$ competitive ratio for more general settings has proven to be a major challenge. For instance, one of the most general forms of online resource allocation with indivisible items and diminishing-returns objectives is \emph{online submodular welfare maximization (OSWM)}, in which we seek to maximize the sum of utilities of offline agents $i \in [n]$ who each have a monotone valuation function $v_i : 2^{[n]} \to \Rp$ which is \emph{submodular}, i.e. $v_i(A) + v_i(B) \geq v_i(A \cup B) + v_i(A \cap B)$. In general, it is known that no polynomial-time algorithm can obtain competitive ratio better than $\frac{1}{2}$ for OSWM, which is obtained by the greedy algorithm \cite{kapralov2013online}.

However, several works have shown that one can beat $\frac{1}{2}$ in special cases of OSWM, including Adwords without small bids \cite{huang2020adwords} and edge-weighted bipartite matching \cite{fahrbach2020edge, gao2022improved, blanc2022multiway}. Moreover, these competitive bounds can be further improved by assuming random order arrivals \cite{korula2015online, buchbinder2019online, huang2019online} or that arrivals come from known distributions \cite{feldman2009onlineB, jaillet2014online, huang2022power}.

\paragraph{Continuously Submodular Functions}
We note that the notion of ``diminishing-returns'' (i.e. monotone decreasing gradients) that we use has been studied previously for various optimization problems, often called \emph{DR-submodularity} \cite{bian2017guaranteed, bian2017continuous}. Additionally, a weaker notion of submodularity for continuous functions has also seen much study, often simply called \emph{continuous submodularity} \cite{bach2010structured, bach2019submodular}. The difference between these properites is easiest to see for twice differentiable functions $f$. We say $f$ is DR-submodular if for any $\bx$ the Hessian has $H_f(\bx)$ has all non-positive entries, but for $f$ to be continuously submodular, we only require off-diagonal entries of $H_f(\bx)$ to be non-positive. Both notions of submodularity for continuous functions have been studied in the context of combinatorial optimization \cite{bian2017continuous, niazadeh2020optimal, patton2023submodular} and online learning \cite{chen2018online, chen2018projection, zhang2019online, sadeghi2020online}.

\section{Properties of our Model}
Before we prove our main result, we will establish some properties of the class of objective functions $f$ we consider in \Cref{def:ocdra}. We will call such functions \emph{CDR-valuation functions}. With these properties, we may see that OCDRA does indeed capture the allocation problems of \cite{devanur2012online}, \cite{devanur2013whole}, and \cite{hathcock2024online}, as well as the extensions mentioned in \Cref{sec:results}. 

\begin{definition}\label{def:cdr}
    A function $f : \Rp^A \to \Rp$ is a \emph{concave diminishing-returns valuation (CDR-valuation)} if it satisfies the conditions in \Cref{def:ocdra}. That is, $f(0) = 0$, $f$ is monotone increasing, and $f$ satisfies 
    \begin{enumerate}
        \item $f$ is concave,
        \item $\nabla f(\bx)$ is monotone decreasing coordinate-wise in $\bx$.
    \end{enumerate}
\end{definition}
We remark that neither properties (1) or (2) in \Cref{def:cdr} are implied by the other, despite their similarities at first glance. To see this, notice that if $f$ is twice differentiable, then property (1) is equivalent to the Hessian of $f$ being negative semi-definite everywhere, whereas property (2) is equivalent to the Hessian to having all non-positive entries.

To better understand \Cref{def:cdr}, we observe some examples of CDR-valuations.
\begin{lemma}[CDR-Valuation Examples]\label{lem:cdr-examples}
The following functions $f : \Rp^m \to \Rp$ are examples of CDR-valuations.
\begin{enumerate}
    \item \textbf{Linear} $f(\bx) = \sum_i b_{i} x_{i}$ where $b_i \geq 0$ for $i \in [m]$.
    \item \textbf{Budget-Additive} $f(\bx) := \min\{\sum_i b_i x_i,~B\}$, where $b \in \Rp^m$ and $B \geq 0$. 
    \item \textbf{Concave-of-Linear} $f(\bx) := M(\sum_i b_i x_i)$ for some $b \in \Rp^m$ and concave, non-decreasing $M : \Rp \to \Rp$ with $M(0) = 0$.
\item \textbf{Polymatroid Budget-Additive} $f(x) := \max\{\sum_i z_i : \bz \in \Rp^m,~\bz \leq \bx,~\forall S \subseteq [m]~ \sum_{i \in S} z_i \leq r(S)\}$, where $r : 2^{[m]} \to \Rp$ is a monotone submodular function with $r(0) = 0$. In other words, $f(\bx)$ is the maximum value of a point $\bz \leq \bx$ and in the polymatroid with rank function $r$.
\end{enumerate}
\end{lemma}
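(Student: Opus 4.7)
The plan is, for each of the four examples, to verify the conditions in \Cref{def:cdr}. The properties $f(0)=0$ and monotone increasing are immediate from inspection in all four cases, given that $b_i, B, r \geq 0$, so the real work is checking concavity and the diminishing-returns property. For the Linear case the upward gradient is the constant vector $\bb{b}$, so concavity and monotone-decreasing gradient are trivial. For the Budget-Additive case, $f$ is the pointwise minimum of two monotone linear functions with non-negative coefficients, hence concave; DR holds because $\partial f/\partial x_i$ equals $b_i$ when $\sum_j b_j x_j < B$ and drops to $0$ once the budget is saturated, and in particular can only decrease as $\bx$ grows. For the Concave-of-Linear case, concavity is composition of a concave function with a linear map, and the chain rule gives upward gradient $M'(\sum_j b_j x_j) \cdot \bb{b}$, where $M'$ is the (upward) derivative of $M$; $M'$ is monotone decreasing in its argument by concavity of $M$, and $\sum_j b_j x_j$ is coordinate-wise monotone increasing in $\bx$ (since $\bb{b} \geq 0$), so composing these yields DR.

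The main work is the Polymatroid Budget-Additive case. Here I would first invoke the classical polymatroid LP-duality fact
\[
f(\bx) \;=\; \min_{S \subseteq [m]} \Bigl( r(S) + \sum_{i \notin S} x_i \Bigr),
\]
which expresses $f$ as a pointwise minimum of monotone linear functions with non-negative coefficients, immediately yielding concavity and monotonicity. Under this representation, the upward partial derivative $\partial f(\bx)/\partial x_i$ is the minimum of $\mathbf{1}[i \notin S^*]$ over all minimizers $S^*$, and so takes values in $\{0,1\}$: it equals $0$ exactly when some minimizer $S^*$ contains $i$. The DR property therefore reduces to showing that if the derivative at coordinate $i$ is $0$ at some $\bx$, it remains $0$ at every $\by \geq \bx$.

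This reduction is the main obstacle, and I would close it by a standard submodular exchange. Let $S^*$ be a minimizer at $\bx$ with $i \in S^*$, and let $T^*$ be a minimizer at $\by$; if $i \in T^*$ we are done, so assume $i \notin T^*$. Optimality of $S^*$ at $\bx$ versus $S^* \cap T^*$ gives $r(S^*) - r(S^* \cap T^*) \leq \sum_{j \in S^* \setminus T^*} x_j$, and submodularity of $r$ yields $r(S^* \cup T^*) - r(T^*) \leq r(S^*) - r(S^* \cap T^*)$, so chaining inequalities and using $\bx \leq \by$ gives $r(S^* \cup T^*) - r(T^*) \leq \sum_{j \in S^* \setminus T^*} y_j$, which rearranges to show $r(S^* \cup T^*) + \sum_{j \notin S^* \cup T^*} y_j \leq r(T^*) + \sum_{j \notin T^*} y_j = f(\by)$. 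Thus $S^* \cup T^*$ is also a minimizer at $\by$ and contains $i$, so the ``good'' minimizer persists under coordinate-wise increases, establishing DR and completing the proof.
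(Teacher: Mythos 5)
Your proposal is correct, and for cases (1)--(3) it matches the paper's argument in substance (the paper just folds (1) and (2) into (3) as special cases). Where you genuinely diverge is case (4): the paper proves concavity by averaging optimal solutions $\bz_\bx, \bz_\by$, and proves diminishing returns by invoking the submodular water-level machinery of Hathcock et al., showing $\frac{\partial f(\bx)}{\partial x_i} = \ind\{w_i\up\bx < 1\}$ and deferring the technical verification to an appendix. You instead pass to the classical min-cut representation $f(\bx) = \min_{S}\bigl(r(S) + \sum_{i \notin S} x_i\bigr)$, which makes concavity and monotonicity immediate and reduces DR to a persistence statement about minimizers, closed by a clean submodular exchange ($S^* \cup T^*$ remains a minimizer at $\by \geq \bx$). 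I checked the exchange chain and the final rearrangement; they are correct. Your route is more self-contained and elementary, at the cost of assuming the LP-duality identity (which is standard but deserves a citation); the paper's route reuses machinery it needs elsewhere anyway. One small point worth making explicit in your write-up: the paper's definition of upward-differentiability requires the one-sided directional derivative to be \emph{linear} in the direction $\by$, i.e.\ given by an inner product with a single gradient vector, whereas a pointwise minimum of linear functions generally only has a concave directional derivative. This is rescued precisely because the argmin family is a lattice (closed under unions), which is the $\bx = \by$ instance of your own exchange argument --- so the ingredient is already present, but you should state that the union of all minimizers is itself a minimizer and defines the upward gradient.
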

\begin{proof}
    We note that examples 1 and 2 are special cases of 3, so we need only show that 3 and 4 are CDR-valuations.

    For (3), we it is easy to see that $f(0) = M(0) = 0$. Additionally, we have $\frac{\partial}{\partial x_i} f(\bx) = M'(\sum_i b_i x_i) \cdot b_i$. This is positive and decreasing in $\bx$ by concavity of $M$, so we have that $f$ is monotone increasing and $\nabla f$ is monotone decreasing coordinate-wise. Finally, we observe that $f$ is concave as 
    \begin{align*}
        f(\lambda \bx + (1 - \lambda) \by) &= M\left(\sum_i(\lambda b_ix_i + (1 - \lambda) b_iy_i)\right) \\
        &\geq \lambda M\left(\sum_i b_ix_i\right) + (1 - \lambda)M\left(\sum_i b_iy_i\right) = \lambda f(\bx) + (1 - \lambda )f(\by).
    \end{align*}

    For (4), we first observe that the function $f$ is monotone, has $f(0) = 0$. To see that $f$ is concave, let $\bx, \by \in \Rp^m$, and suppose $\bz_\bx$ and $\bz_\by$ achieve the maximum in the definition for $f(\bx)$ and $f(\by)$ respectively. Then for $\lambda \in [0,1]$, the point $\bz := \lambda \bz_\bx + (1 - \lambda) \bz_\by$ is feasible in the maximum for $f(\lambda \bx + (1-\lambda) \by)$. Hence, $f(\lambda \bx + (1-\lambda) \by) \geq \sum_i z_i = \lambda f(\bx) + (1 - \lambda) f(\by)$.
    
    To show that $\nabla f$ is monotone decreasing, we need to use the ``submodular water-levels'' machinery established in \cite{hathcock2024online}. We claim $\frac{\partial f(\bx)}{\partial x_i} = \ind\{{w_i\up \bx < 1}\}$, where $\bb w\up x \in \Rp^m$ is the water-level vector defined in Definition 3.1 of \cite{hathcock2024online}. Using the Proposition 3.4 in \cite{hathcock2024online}, we know that $w_i\up \bx$ is monotone increasing in $\bx$, so our claim implies that $\frac{\partial f(\bx)}{\partial x_i}$ is monotone decreasing, as desired. 

    Showing this claim is not too difficult, but it requires more technical properties of submodular water-levels. As this is tangential to our main results, we defer the remainder of the proof to \Cref{apx:wl-claim}.
\end{proof}

\subsection{Operations on CDR Valuations}

In addition, we note the following closure properties of the class of CDR-valuations, which allow us to combine and modify CDR-valuations to generate new ones.

\begin{lemma}[Operations Preserving CDR-Valuations]
The class of CDR-valuations is closed under the following operations.
    \begin{enumerate}
        \item \textbf{Positive Linear Combinations.} If $f_1, \dots, f_k$ are CDR-valuations on $\Rp^m$, and $\lambda_1, \dots, \lambda_k \geq 0$, then $f = \sum_{i=1}^k \lambda_i f_i$ is a CDR-valuation.
        \item \textbf{Positive Linear Transformation of Inputs.} Suppose $f : \Rp^k \to \Rp$ is a CDR-valuation, and $A \in \Rp^{k \times m}$ is a matrix with non-negative entries. Then $h : \Rp^{m} \to \Rp$ given by $h(\bx) = f(A \bx)$ is a CDR-valuation.
        \item \textbf{Composition.} Suppose $g_1, \dots, g_k$ are CDR-valuations such that $g_i:\Rp^{m} \to \Rp$, and $f : \Rp^k \to \Rp$ is a CDR-valuation. Define the function $h : \Rp^{m} \to \Rp$ by
        \[
        h(\bx) = f\Big(g_1(\bx),~\dots~, g_k(\bx)\Big).
        \]
        Then $h$ is a CDR-valuation.
    \end{enumerate}
\end{lemma}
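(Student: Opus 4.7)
The plan is to verify for each operation that all four defining properties of a CDR-valuation are preserved: $f(0)=0$, monotonicity, concavity, and coordinate-wise monotone-decreasing (upward) gradient. The first two properties are essentially mechanical in all three cases, so the main effort goes into concavity and the diminishing-returns condition on $\nabla f$.

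For positive linear combinations, every property is preserved by inspection. We have $f(0) = \sum_i \lambda_i f_i(0) = 0$, monotonicity and concavity are each closed under positive linear combinations, and since $\nabla f(\bx) = \sum_i \lambda_i \nabla f_i(\bx)$, each coordinate is a positive combination of coordinate-wise non-increasing functions, hence non-increasing.

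For $h(\bx) = f(A\bx)$ with $A \in \Rp^{k \times m}$, the key observation is that $\bx \mapsto A\bx$ is monotone (since $A \geq 0$ entrywise) and affine. Therefore $h(0) = f(0) = 0$, monotonicity follows from monotonicity of $f$ together with monotonicity of $A\bx$ in $\bx$, and concavity follows from the classical fact that the composition of a concave function with an affine map is concave. For the DR property, the chain rule for upward-gradients yields $\nabla h(\bx) = A^\T \nabla f(A\bx)$, so $(\nabla h(\bx))_j = \sum_k A_{kj}\, (\nabla f(A\bx))_k$. When $\bx$ increases coordinate-wise, $A\bx$ increases coordinate-wise (since $A \geq 0$), so $\nabla f(A\bx)$ decreases coordinate-wise by DR of $f$, and the non-negative combination $A^\T \nabla f(A\bx)$ inherits this.

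The composition case is the most delicate, and I expect it to be the main obstacle. Writing $g(\bx) = (g_1(\bx), \ldots, g_k(\bx))$, we get $h(0) = f(g(0)) = f(0) = 0$ since each $g_i(0) = 0$, and monotonicity of $h$ follows from monotonicity of each $g_i$ together with monotonicity of $f$. For concavity, I would invoke the standard fact that if $f$ is concave and non-decreasing in each coordinate and each $g_i$ is concave, then $\bx \mapsto f(g_1(\bx), \ldots, g_k(\bx))$ is concave; both hypotheses hold here. The DR property requires the chain rule for upward-gradients, which gives
\[
\frac{\partial h(\bx)}{\partial x_j} \;=\; \sum_{i=1}^k \frac{\partial f}{\partial y_i}\bigl(g(\bx)\bigr)\cdot \frac{\partial g_i(\bx)}{\partial x_j}.
\]
Now fix $j$ and examine how each summand behaves as $\bx$ increases coordinate-wise. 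The factor $\partial_j g_i(\bx)$ is non-negative (by monotonicity of $g_i$) and non-increasing in $\bx$ (by DR of $g_i$). The factor $\partial_i f(g(\bx))$ is non-negative (by monotonicity of $f$) and, since $g(\bx)$ increases coordinate-wise in $\bx$ (monotonicity of each $g_i$), it is non-increasing in $\bx$ by DR of $f$. Hence each summand is a product of two non-negative, coordinate-wise non-increasing functions of $\bx$, which is itself non-negative and non-increasing; summing over $i$ preserves this, proving the DR condition. The one technical subtlety to check is the validity of the chain rule at the level of upward-gradients rather than ordinary gradients, which follows from the definition of upward-differentiability together with the monotonicity of the inner functions ensuring one-sided perturbations remain in $\Rp^k$.
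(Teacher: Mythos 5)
Your proof is correct and follows essentially the same route as the paper: inspection for positive linear combinations, the chain rule plus the product-of-nonnegative-nonincreasing-factors argument for the DR property of compositions, and the standard concave-of-concave-with-monotone-outer fact for concavity. The only cosmetic difference is that the paper treats positive linear transformations as the special case $g_i(\bx) = \sum_j A_{ij}x_j$ of composition rather than arguing it directly, and your explicit remark about justifying the chain rule for upward-gradients is a welcome addition the paper leaves implicit.
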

\begin{proof}
    For (1), we simply observe that each property of \Cref{def:cdr} is preserved under positive linear combinations. If $f_1, \dots, f_k$ are CDR-valuations, then $f = \sum_{i \in [k]} \lambda_i f_i$ is concave, monotone increasing, and has $f(0) = \sum_{i \in [k]} \lambda_i f(0) = 0$. Moreover, since $\nabla f = \sum_{i \in [k]} \lambda_i \nabla f$, it is easy to see $\nabla f$ is monotone decreasing since all $\nabla f_i$ are monotone decreasing.

    Next, we have that (2) is a special case of (3), where we take $g_i(\bx) = \sum_{j \in [m]} A_{ij} x_j$. Hence, it only remains to show property 3. First, we have $h(0) = f(g_1(0), \dots, g_k(0)) = f(0) = 0$. Next, computing partial derivatives of $h$ gives
    \[
    \frac{\partial h(\bx)}{\partial x_j} = \sum_{i \in [k]} \frac{\partial g_i (\bx)}{\partial x_j} \cdot \frac{\partial f (g_1(\bx), \dots, g_k(\bx))}{\partial (g_i(\bx))}.
    \]
    Notice that this expression is non-negative, as all terms are non-negative. Thus, $h$ is monotone increasing. Additionally, since all $g_i$ are monotone increasing, and $\nabla f$ is monotone decreasing, we have that all terms are monotone decreasing in $\bx$. Therefore, $\nabla h(\bx)$ is decreasing in $\bx$. 

    Lastly, it remains to check that $h$ is concave. For $\lambda \in [0,1]$ and $\bx, \by \in \Rp^m$, we have
    \begin{align*}
        h(\lambda \bx + (1 - \lambda)\by) &= f\Big(\big(g_i(\lambda \bx + (1 - \lambda) \by)\big)_{i \in [k]}\Big) \\
        &\geq  f\Big(\big(\lambda g_i(\bx) + (1 - \lambda) g_i(\by)\big)_{i \in [k]}\Big)&&\text{by concavity of $g_i$ and monotonicity of $f$,}\\
        & \geq \lambda f\Big(\big(g_i(\bx)\big)_{i \in [k]} \Big) + (1 - \lambda) f\Big(\big(g_i(\by)\big)_{i \in [k]}\Big)&&\text{by concavity of $f$,}\\
        &=\lambda h(\bx) + (1 - \lambda) h(\by).
    \end{align*}
\end{proof}

\subsection{Capturing Prior Settings} Given \Cref{lem:cdr-examples} and \Cref{lem:cdr-props}, we can now see how OCDRA captures the settings of \cite{devanur2012online}, \cite{devanur2013whole}, and \cite{hathcock2024online}.

\paragraph{Online Matching with Concave Returns \cite{devanur2012online}} In this setting, we have $A = [m] \times [n]$, and each $A_j = \{\{i,j\} : i \in [m]\}$. Our objective $f$ has the form
    \[
    f(\bx) = \sum_{i \in [m]} M_i(\sum_{j \in [n]}b_{ij}x_{ij}),
    \]
    where $b_{ij} \geq 0$ and each $M_i : \Rp \to \Rp$ is concave, monotone increasing, and has $M_i(0) = 0$. From \Cref{lem:cdr-examples}, each function $M_i(\sum_{j \in [n]}b_{ij}x_{ij})$ is a CDR-valuation, so property (1) of \Cref{lem:cdr-props} tells us that $f$ is a CDR valuation.
\paragraph{Online Whole Page Optimization \cite{devanur2013whole}} 
    In this setting, we have a set $[m]$ of offline agents with budgets $B_i$, and each arrival $j \in [n]$ has a set $A_j$ representing different ``configurations'' $a \in A_j$ in which $j$ can be allocated. Each $a \in A_j$ consumes budget $b_{i, a} \geq 0$ from agent $i$ and provides reward $w_{i, a} \geq 0$. The objective is then 
    \[
    f(\bx) = \sum_{i \in [m]} \max\left\{\sum_{a \in A}w_{i,a} z_{i,a} ~:~\sum_{a \in A} b_{i,a} z_{i,a} \leq B_i;~~ \forall a \in A,~0 \leq z_{i,a} \leq x_a\right\}.
    \]
    To see that this is a CDR-valuation, we note that we can first express $f$ as
    \[
    f(\bx) = \int_0^\infty dt \cdot \sum_{i \in [m]} \min\left\{\sum_{a ~:~ t \leq \frac{w_{i,a}}{b_{i,a}}}b_{i,a} x_{i,a},~B_i \right\}.
    \]
    This is a positive linear combination of budget additive functions, so from \Cref{lem:cdr-examples} and \Cref{lem:cdr-props} we have that $f$ is a CDR-valuation.

\paragraph{Online Submodular Assignment \cite{hathcock2024online}} 
    In this setting, we have a monotone submodular rank function $r : 2^{A} \to \Rp$ which defines a polymatroid $P_r := \{\bz \in \Rp^A : \forall S \subseteq A,~ \sum_{i \in S} z_i \leq r(S)\}$. Additionally, we have costs $b_a \geq 0$ and values $w_a \geq 0$ for each $a \in A$. Our objective is then 
    \[
    f(\bx) = \max \left\{ \sum_{a \in A} w_a z_a ~:~ \forall S \subseteq A,~\sum_{a \in S} b_a z_a \leq r(S);~~\forall a \in A,~0\leq z_a\leq x_a\right\}.
    \]
    Again, we can decompose this objective as a integral over ``bang-per-buck'' levels $\frac{w_a}{b_a}$ to get
    \[
    f(\bx) = \int_0^\infty dt \cdot \max \left\{ \sum_{a : \frac{w_a}{b_a} \geq t} b_a z_a ~:~ \forall S \subseteq A,~\sum_{a \in S : \frac{w_a}{b_a} \geq t} b_a z_a \leq r(S);~~\forall a \in A,~0\leq z_a\leq x_a\right\}.
    \]
    From \Cref{lem:cdr-examples}, and \Cref{lem:cdr-props}, we know that this inner maximum is a CDR-valuation, since it is a polymatroid budget-additive function over the linear transformation of $\bx$ given by $(b_a x_a)_{a \in A}$. Since $f$ is a positive linear combination of such functions, we have that $f$ is a CDR-valuation.

\section{Primal-Dual Proof of \Cref{thm:main}}
We now prove our main theorem using an online primal-dual approach. To do so, we first establish a notion of duality based on Fenchel duality for convex functions. However, as we are working with positive concave functions, it will be convenient for us to use the following notation.
\begin{definition}\label{def:dual}
    For a concave function $f : \Rp^A \to \Rp$, we will use $\hat f : \Rp^A \to \Rp$ to denote the function given by
    \[
    \hat f(\alpha) := \sup_{\bx \in \Rp^A} \left(f(\bx) - \ip{\alpha, \bx}\right).
    \]
    Note that $\hat f(\alpha) = (-f)^*(-\alpha)$, where $(-f)^*$ denotes the Fenchel dual of the convex function $-f$.
\end{definition}

Intuitively, for a function $f$ in 1 dimension, $\hat f(\alpha)$ gives the y-intercept of the tangent line to $f$ with slope $\alpha$. In general, $\hat f$ gives the constant term in the linear approximation to $f$ with linear component $\ip{\alpha, \cdot}$. We note that $\hat f$ has the following properties, due to its relation to the Fenchel dual.
\begin{proposition}\label{lem:cdr-props}
    Let $f$ be a CDR-valuation function. Then we have
    \begin{enumerate}[(1)]
        \item $\hat f$ is non-negative, convex, and monotone decreasing on $\Rp^A$.
        \item For all $\bx \in \Rp^A$, we have $\hat f(\nabla f(\bx)) = f(\bx) - \ip{\nabla f(\bx),~\bx}$.
    \end{enumerate}
\end{proposition}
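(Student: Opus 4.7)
The plan is to treat the two parts separately; part (1) follows from standard properties of pointwise suprema of affine functions, while part (2) is the usual Fenchel--Young equality and requires showing $\nabla f(\bx)$ is a supergradient of $f$ at $\bx$. For part (1), I would write $\hat f(\alpha) = \sup_{\bx \in \Rp^A} \phi_\bx(\alpha)$, where $\phi_\bx(\alpha) := f(\bx) - \ip{\alpha, \bx}$ is affine in $\alpha$. Non-negativity follows since $\phi_0(\alpha) = f(0) - 0 = 0$; convexity follows because a supremum of affine (hence convex) functions is convex; and the monotone-decreasing property follows because, for $\alpha \leq \alpha'$ coordinatewise and any $\bx \in \Rp^A$, we have $\ip{\alpha, \bx} \leq \ip{\alpha', \bx}$, so $\phi_\bx(\alpha) \geq \phi_\bx(\alpha')$ pointwise in $\bx$, yielding $\hat f(\alpha) \geq \hat f(\alpha')$.

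For part (2), plugging $\by = \bx$ into the supremum defining $\hat f(\nabla f(\bx))$ immediately gives $\hat f(\nabla f(\bx)) \geq f(\bx) - \ip{\nabla f(\bx), \bx}$. The reverse inequality is equivalent to the supergradient property $f(\by) \leq f(\bx) + \ip{\nabla f(\bx), \by - \bx}$ for every $\by \in \Rp^A$. Setting $\bv = \by - \bx$ and considering the concave univariate function $g(\lambda) := f(\bx + \lambda \bv)$ on $[0,1]$, concavity gives $f(\by) = g(1) \leq g(0) + g'(0^+) = f(\bx) + D_{\bv} f(\bx)$, where $D_{\bv} f(\bx)$ denotes the one-sided directional derivative. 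It therefore suffices to prove $D_{\bv} f(\bx) \leq \ip{\nabla f(\bx), \bv}$.

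For $\bv \in \Rp^A$, this is an equality by the very definition of the upward gradient; the real work is in handling the case where $\bv = \by - \bx$ has mixed-sign entries. I would decompose $\bv = \bv^+ - \bv^-$ with $\bv^+, \bv^- \in \Rp^A$ supported on disjoint coordinates, and note that $\bv^-$ is automatically supported where $\bx_i > 0$ (since $\by \in \Rp^A$ forces $\bv_i \geq 0$ whenever $\bx_i = 0$), so all directional derivatives below are well-defined. The map $\bu \mapsto D_{\bu} f(\bx)$ is concave (from concavity of $f$ on line segments) and positively homogeneous, hence superadditive. Applying superadditivity with the splitting $\bv^+ = \bv + \bv^-$ gives $D_{\bv^+} f(\bx) \geq D_{\bv} f(\bx) + D_{\bv^-} f(\bx)$, so
\[
D_{\bv} f(\bx) \leq D_{\bv^+} f(\bx) - D_{\bv^-} f(\bx) = \ip{\nabla f(\bx), \bv^+} - \ip{\nabla f(\bx), \bv^-} = \ip{\nabla f(\bx), \bv},
\]
completing the proof. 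The main obstacle is precisely this upgrade from the upward gradient's defining property on nonnegative directions to the full supergradient inequality for arbitrary $\by \in \Rp^A$; the superadditivity/decomposition argument above is the cleanest bridge I see, and it is worth noting that it uses only concavity of $f$ and upward-differentiability (diminishing returns are not needed for this proposition, though they play a central role elsewhere).
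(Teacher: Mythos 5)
Your proof is correct, and both parts follow the same overall route as the paper: part (1) is verbatim the paper's argument (supremum of decreasing affine functions, plus the $\bx=0$ witness for non-negativity), and part (2) is the Fenchel--Young equality obtained by showing $\nabla f(\bx)$ is a supergradient. The one substantive difference is in how the supergradient inequality $f(\by) \leq f(\bx) + \ip{\nabla f(\bx), \by - \bx}$ is justified: the paper simply asserts it ``by concavity of $f$,'' implicitly treating the upward gradient as a supergradient, whereas you correctly observe that the defining property of the upward gradient only controls directional derivatives along nonnegative directions, and you bridge the gap via the decomposition $\bv = \bv^+ - \bv^-$ together with superadditivity of $\bu \mapsto D_{\bu}f(\bx)$. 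That argument is sound (superadditivity follows from $f(\bx + t(\bu+\bw)) \geq \tfrac12 f(\bx+2t\bu) + \tfrac12 f(\bx+2t\bw)$, and all the points involved stay in $\Rp^A$), and it patches a step the paper leaves implicit rather than taking a different route. Your closing remark is also consistent with the paper: neither proof uses the diminishing-returns property, only concavity, $f(0)=0$, and upward-differentiability.
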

\begin{proof}
    To prove property (1), notice that $\hat f$ is a supremum over decreasing affine functions. Hence, $\hat f$ is monotone decreasing and convex. To see that $\hat f$ is non-negative, notice that taking $\bx = 0$ gives $\hat f(\alpha) \geq f(0) - \ip{a, 0} = 0$.

    To see property (2), notice that for any $\bx, \by \in \Rp^A$, we have $f(\bx) + \ip{\nabla f(\bx),~(\by-\bx)} \geq f(\by)$ by concavity of $f$. Rearranging gives
    \[
    f(\bx) - \ip{\nabla f(\bx),~\bx} \geq f(\by) - \ip{\nabla f(\bx), \by}.
    \]
    Hence, we conclude $f(\bx) - \ip{\nabla f(\bx),~\bx} = \sup_{\by \in \Rp^A} \left(f(\by) - \ip{\nabla f(\bx), \by}\right) = \hat f(\nabla f(\bx))$.
\end{proof}

\paragraph{Primal and Dual Programs} Using this notation, we can write primal and dual programs for OCDRA. Note that for a given instance of OCDRA, we can represent the problem by the concave program below.
\begin{align*}
    \max~&~f(\bx), \\
    \text{s.t.} ~&~ \sum_{a \in A_j} x_a \leq 1 \quad \forall j \in [n],\\
    &~\bx \geq 0.
\end{align*}
Additionally, using the function $\hat f$ from \Cref{def:dual}, we can write a dual convex program as
\begin{align}
    \min~&~\hat f(\alpha) + \sum_j \beta_j, \nonumber\\
    \text{s.t.} ~&~ \beta_j \geq \alpha_a \quad \forall j \in [n],~a \in A_j, \label{eq:dual-ineq}\\
    &~\alpha, \beta \geq 0.\nonumber
\end{align}
Although it may not be clear immediately, we can verify that this program indeed satisfies a weak duality property with our primal. This allows us to use the value of any feasible dual solution as an upper bound on the optimal primal objective.
\begin{lemma}[Weak Duality]
    Let $f:\Rp^A \to \Rp$ be a concave function, and suppose $\bx$ and $(\bb \alpha, \bb \beta)$ are feasible solutions to the above primal and dual programs respectively. Then $f(\bx) \leq \hat f(\alpha) + \sum_j \beta_j$.
\end{lemma}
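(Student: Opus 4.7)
The plan is to chain together two simple inequalities: one coming directly from the definition of $\hat f$, and one coming from the dual feasibility constraint $\beta_j \geq \alpha_a$ combined with the primal constraint $\sum_{a \in A_j} x_a \leq 1$.

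First, by the definition of $\hat f$ in \Cref{def:dual}, for any $\bx \in \Rp^A$ we have
\[
\hat f(\alpha) \geq f(\bx) - \ip{\alpha, \bx},
\]
which rearranges to $f(\bx) \leq \hat f(\alpha) + \ip{\alpha, \bx}$. So it suffices to show $\ip{\alpha, \bx} \leq \sum_j \beta_j$.

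Next, I would use the assumption from \Cref{def:ocdra} that the sets $\{A_j\}_{j \in [n]}$ are disjoint (and that $A = \bigcup_j A_j$), which lets me split the inner product as
\[
\ip{\alpha, \bx} = \sum_{a \in A} \alpha_a x_a = \sum_{j \in [n]} \sum_{a \in A_j} \alpha_a x_a.
\]
For each fixed $j$, the dual feasibility $\beta_j \geq \alpha_a$ for all $a \in A_j$ together with primal feasibility $\sum_{a \in A_j} x_a \leq 1$ and $\bx \geq 0$ yields
\[
\sum_{a \in A_j} \alpha_a x_a \leq \beta_j \sum_{a \in A_j} x_a \leq \beta_j.
\]
Summing over $j$ gives $\ip{\alpha, \bx} \leq \sum_j \beta_j$, and combining with the first inequality completes the proof.

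There is no real obstacle here: the statement is the standard weak-duality fact for this Fenchel-style relaxation, and the only nontrivial ingredient is noticing that disjointness of the $A_j$'s allows the inner product to be decomposed blockwise so that the $\beta_j$'s can absorb each block. I would write the proof as the two-line chain of inequalities above, explicitly noting where concavity of $f$ enters (implicitly, through the finiteness and defining inequality of $\hat f$) and where the primal and dual feasibility constraints are each used.
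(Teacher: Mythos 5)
Your proof is correct and follows essentially the same route as the paper's: apply the defining inequality of $\hat f$ to get $f(\bx) \leq \hat f(\alpha) + \ip{\alpha,\bx}$, then use the block decomposition over the disjoint $A_j$ together with $\beta_j \geq \alpha_a$ and $\sum_{a \in A_j} x_a \leq 1$ to absorb $\ip{\alpha,\bx}$ into $\sum_j \beta_j$. The paper merely arranges the same two steps as a single chain ending in $\sum_j \sum_{a \in A_j} (\beta_j - \alpha_a) x_a \geq 0$.
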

\begin{proof}
    We have
    \begin{align*}
        \hat f(\bb \alpha) + \sum_j \beta_j 
        &\geq f(\bx) - \ip{\bb \alpha, \bx} + \sum_j \beta_j&&\text{by definition of $\hat f$,}\\
        &\geq f(\bx) - \sum_j\sum_{a \in A_j} \alpha_a x_a + \sum_j \beta_j \sum_{a \in A_j} x_a &&\text{since $\sum_{a \in A_j}x_a \leq 1$,}\\
        &= f(\bx) + \sum_j\sum_{a \in A_j} (\beta_j - \alpha_a) x_a\\
        &\geq f(\bx)&&\text{since $\beta_j \geq \alpha_a$ for $a \in A_j$}.
    \end{align*}
\end{proof}

\paragraph{A Sufficient Condition for $\gamma$-Competitiveness} Using this dual program, we can now define the property of $U$ which will allow us to bound the competitive ratio of \Cref{alg:con-greedy}.

\begin{definition}\label{def:good}
    For a CDR-valuation function $f$ and $\gamma \in [0,1]$, we say a function $U$ is \emph{$\gamma$-balanced} with respect to $f$ if $U$ is a CDR-valuation function such that for any $\bx \in \Rp^A$, we have
    \begin{equation}\label{eq:balanced}
    \frac{1}{\gamma}f(\bx) \geq U(\bx) + \hat f\left(\nabla U(\bx)\right).
    \end{equation}
\end{definition}
We call such a function ``balanced,'' since $U$ must balance the contribution of both terms on the RHS of \eqref{eq:balanced}. If $U$ grows too quickly, then $U(\bx)$ will be large, but if $U$ grows too slowly, then $\nabla U(\bx)$ will be small, and hence $\hat f(\nabla  U(\bx))$ will be large. Thus, the ideal $U$ for a function $f$ will be one which grows not too quickly and not too slowly, so that \eqref{eq:balanced} holds for the largest possible value of $\gamma$.

\begin{theorem}
    Suppose $U$ is $\gamma$-balanced with respect to  $f$. Then for an instance of OCDRA with objective $f$, the continuous greedy algorithm with respect to $U$ given by \Cref{alg:con-greedy} is $\gamma$-competitive.
\end{theorem}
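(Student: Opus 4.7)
The plan is to use the online primal-dual method with dual variables constructed from the algorithm's execution: set $\alpha = \nabla U(\bx^{\text{final}})$, where $\bx^{\text{final}}$ is the allocation produced by \Cref{alg:con-greedy} at the end, and $\beta_j = U(\bx^{(j)}) - U(\bx^{(j-1)})$, where $\bx^{(j)}$ is the allocation after processing item $j$ (with $\bx^{(0)} = 0$). The goal is to show that (i) $(\alpha, \beta)$ is dual-feasible, and (ii) its dual objective is at most $\frac{1}{\gamma} f(\bx^{\text{final}})$. Combined with weak duality, this gives $f(\bx^{\text{OPT}}) \leq \hat f(\alpha) + \sum_j \beta_j \leq \frac{1}{\gamma} f(\bx^{\text{final}})$, proving $\gamma$-competitiveness.

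For the dual objective, I would first observe that $\sum_j \beta_j$ telescopes to $U(\bx^{\text{final}}) - U(0) = U(\bx^{\text{final}})$. Therefore the dual objective is
\[
\hat f(\alpha) + \sum_j \beta_j = \hat f(\nabla U(\bx^{\text{final}})) + U(\bx^{\text{final}}),
\]
and the $\gamma$-balanced property of $U$ with respect to $f$ (applied at $\bx = \bx^{\text{final}}$) immediately bounds this by $\frac{1}{\gamma} f(\bx^{\text{final}})$. This step is essentially the definition of $\gamma$-balancedness, so the work here is conceptual rather than computational.

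The main obstacle, and where careful bookkeeping is needed, is verifying dual feasibility, namely $\beta_j \geq \alpha_a$ for every $a \in A_j$. Let $\bx(t)$ for $t \in [0,1]$ denote the trajectory of allocations during the processing of item $j$, so that $\bx(0) = \bx^{(j-1)}$ and $\bx(1) = \bx^{(j)}$. By the chain rule, $\frac{d}{dt} U(\bx(t)) = \frac{\partial U(\bx(t))}{\partial x_{a_t}}$, and the greedy choice of $a_t$ gives $\frac{d}{dt} U(\bx(t)) = \max_{a' \in A_j} \frac{\partial U(\bx(t))}{\partial x_{a'}}$. Integrating,
\[
\beta_j = \int_0^1 \max_{a' \in A_j} \frac{\partial U(\bx(t))}{\partial x_{a'}}\, dt \geq \int_0^1 \frac{\partial U(\bx(t))}{\partial x_{a}}\, dt \geq \int_0^1 \frac{\partial U(\bx^{\text{final}})}{\partial x_a}\, dt = \alpha_a,
\]
where the final inequality uses that $U$ has diminishing returns (i.e., $\nabla U$ is monotone decreasing coordinate-wise) together with $\bx(t) \leq \bx^{\text{final}}$ componentwise, which holds since subsequent arrivals only increase coordinates outside of $A_j$ and the $A_j$'s are disjoint. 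Non-negativity of $\alpha$ and $\beta$ follows from monotonicity of $U$.

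Finally, I would combine the two ingredients: by weak duality applied to any optimal primal $\bx^{\text{OPT}}$ and the feasible dual $(\alpha, \beta)$ constructed above,
\[
f(\bx^{\text{OPT}}) \leq \hat f(\alpha) + \sum_j \beta_j = \hat f(\nabla U(\bx^{\text{final}})) + U(\bx^{\text{final}}) \leq \tfrac{1}{\gamma} f(\bx^{\text{final}}),
\]
yielding the claimed $\gamma$-competitive ratio. The only subtlety I anticipate is the standard one of handling upward-gradients rather than classical gradients (since $U$ is only assumed upward-differentiable), but all perturbations in the argument are in the non-negative direction, so the upward-gradient framework suffices without modification.
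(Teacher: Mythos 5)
Your proposal is correct and follows essentially the same primal-dual argument as the paper: identical dual variables ($\alpha = \nabla U$ at the final point, $\beta_j$ the increment of $U$ during arrival $j$), the same telescoping, the same use of gradient monotonicity for feasibility, and the same application of the balanced property plus weak duality. The only cosmetic difference is that the paper maintains the duals continuously and argues feasibility persists as $\alpha$ decreases, whereas you verify feasibility of the final duals in one step; these are equivalent.
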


\begin{proof}
    Over the course of the algorithm, we will update dual variables $\alpha$ and $\beta$ continuously along with $\bx$. To track our primal and dual variable updates, let $\alpha \up {j,t}$, $\beta \up {j,t}$, and $\bx \up {j,t}$ denote the values of $\alpha$, $\beta$, and $\bx$ respectively during the $j$th arrival at each time $t \in [0,1]$. Recall that at time $t$, we increase $x_{a_t}$ by $dt$ for some $a_t \in \arg \max_{a \in A_j} \frac{\partial}{\partial x_a} U(\bx\up {j,t})$. As we increase $x$ at time $t$, we also increase $\beta_j$ by $d\beta_j := \frac{\partial}{\partial x_{a_t}} U(x\up {j,t}) dt$, and maintain $\alpha \up {j,t} = \nabla U(x \up {j,t})$ at all times.
    
    Notice that at time $t = 1$, we must have $\beta_j \up {j,1} \geq \max_{a \in A_j} \alpha_a \up {j,1}$, since
    \[
    \beta_j \up {j,1} = \int_0^1 \frac{\partial}{\partial x_{a_t}} U(\bx \up {j,t})dt =\int_0^1 \max_{a \in A_j} \frac{\partial}{\partial x_{a}} U(\bx \up {j,t})dt \geq \max_{a \in A_j} \frac{\partial}{\partial x_{a}} U(\bx \up {j,1})= \max_{a \in A_j} \alpha_a \up {j,1},
    \]
    where the inequality follows from the fact that $\nabla U(\bx\up t)$ is decreasing over time.
    This implies that $\alpha \up{j,1}$ and $\beta \up {j,1}$ satisfy \eqref{eq:dual-ineq}. Moreover, for the remaining course of the algorithm, $\alpha_a$ is monotonically decreasing and $\beta_j$ unchanged. Hence, our final dual values for $\alpha$ and $\beta$ at the end of the algorithm also satisfy \eqref{eq:dual-ineq}. Since this holds for all $j$, and our dual variables are non-negative, we have that the final $\alpha$ and $\beta$ values are feasible.

    Next, we compare the primal and dual objectives at the end of the algorithm. We hencefore will use $\bx$, $\alpha$, and $\beta$ to the denote the final values of the primal and dual variables when the algorithm completes. 
    
    Notice first that $\beta_j$ is exactly equal to the change in the value of $U(\bx)$ upon the $j$th arrival, as 
    \[\beta_j = \int_0^1 \frac{\partial}{\partial x_{a_t}} U(\bx \up {j,t})dt = \int_0^1 \ip{\nabla U(\bx \up {j,t}),~\frac{d \bx\up{j,t}}{dt}} dt = \int_0^1 \frac{d U(\bx\up{j,t})}{dt} \cdot dt = U(\bx\up{j,1}) - U(\bx\up{j,0}),
    \]
    and so $\sum_j \beta_j = U(\bx) - U(0) =  U(\bx)$. Hence, since $U$ is $\gamma$-balanced, we have
    \[
    f(\bx) \geq \gamma\left(U(\bx) + \hat f(\nabla U(\bx))\right) = \gamma (\sum_j \beta_j + \hat f(\alpha)).
    \]
    By weak duality, we have that $f(\bx) \geq \gamma f(\bx^*)$ for any feasible primal solution $\bx^*$, which completes our proof.
\end{proof}

\subsection{Existence of a Balanced Function}
To complete our proof of \Cref{thm:main}, it only remains to show that we can find a function $U$ which is $(1 - \frac{1}{e})$-balanced with respect to any CDR-valuation $f$. We prove the function given in \eqref{eq:u-def} satisfies this property.
\begin{lemma}\label{lem:DR-decomp}
    For any CDR-valuation function $f : \Rp^A \to \Rp$, define the function $U : \Rp^A \to \Rp$ by
    \[
    U(\bx) = \frac{1}{e-1}\int_0^1 e^t \cdot t f(\bx/t) dt.
    \]
    Then $U$ is $(1 - \frac{1}{e})$-balanced with respect to $f$.
\end{lemma}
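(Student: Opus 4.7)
To show $U$ is $(1-\frac{1}{e})$-balanced with respect to $f$, I need to verify (i) that $U$ is itself a CDR-valuation and (ii) that $U(\bx) + \hat f(\nabla U(\bx)) \leq \frac{e}{e-1} f(\bx)$ for all $\bx \in \Rp^A$. For (i), observe that for each $t \in (0,1]$ the map $\bx \mapsto tf(\bx/t)$ is a CDR-valuation, because the CDR property is preserved both under positive-linear input transformations ($\bx \mapsto \bx/t$) and under positive output scalings ($f \mapsto tf$); then $U$, being a positive-weighted integral of such functions, inherits concavity, monotonicity, the diminishing-returns property on the gradient, and $U(0)=0$.

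For (ii), the crux of the argument, I would first differentiate under the integral sign to obtain
\[
\nabla U(\bx) = \frac{1}{e-1}\int_0^1 e^t \nabla f(\bx/t)\, dt.
\]
Since $\int_0^1 \frac{e^t}{e-1}\, dt = 1$, this writes $\nabla U(\bx)$ as a convex combination of the gradient vectors $\{\nabla f(\bx/t)\}_{t \in (0,1]}$. By convexity of $\hat f$ (\Cref{lem:cdr-props}(1)) and Jensen's inequality, followed by the identity $\hat f(\nabla f(\by)) = f(\by) - \ip{\nabla f(\by), \by}$ from \Cref{lem:cdr-props}(2),
\[
\hat f(\nabla U(\bx)) \leq \frac{1}{e-1}\int_0^1 e^t \big[f(\bx/t) - \ip{\nabla f(\bx/t), \bx/t}\big]\, dt.
\]

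Adding $U(\bx) = \frac{1}{e-1}\int_0^1 e^t \cdot tf(\bx/t)\, dt$, the combined integrand becomes $e^t\big[(t+1)f(\bx/t) - \ip{\nabla f(\bx/t), \bx/t}\big]$. The key trick is to set $h(t) := tf(\bx/t)$: a direct chain-rule computation gives $h'(t) = f(\bx/t) - \ip{\nabla f(\bx/t), \bx/t}$, so the bracket equals exactly $h(t) + h'(t)$, making the integrand $e^t(h(t) + h'(t)) = \frac{d}{dt}[e^t h(t)]$ an exact derivative. The fundamental theorem of calculus then yields
\[
U(\bx) + \hat f(\nabla U(\bx)) \leq \frac{1}{e-1}\big[e\, h(1) - h(0^+)\big] = \frac{ef(\bx) - \lim_{t \to 0^+} tf(\bx/t)}{e-1} \leq \frac{e}{e-1} f(\bx),
\]
where the last inequality uses $f \geq 0$. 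This is the desired balanced inequality for $\gamma = 1 - 1/e$.

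The main creative obstacle is spotting the substitution $h(t) = tf(\bx/t)$ which collapses the combined integrand into $(e^t h(t))'$; this is precisely what dictates the particular weight $e^t \cdot t$ in the definition of $U$ (the exponential reproduces itself in the product rule, and the factor $t$ is what makes the extra $f(\bx/t)$ term line up with $h'$). All other steps — differentiating under the integral, applying Jensen's inequality to $\hat f$, using the chain rule for the upward-differentiable $f$, and invoking FTC on $h$ — are handled by standard analytical arguments, since $h$ is locally Lipschitz on $(0,1]$ by concavity of $f$, and the limit $\lim_{t \to 0^+} tf(\bx/t)$ exists and is non-negative because, under the substitution $s = 1/t$, it becomes $\lim_{s \to \infty} f(s\bx)/s$, a monotone quantity by concavity of $f$ together with $f(0) = 0$.
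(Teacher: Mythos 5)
Your proposal is correct and follows essentially the same route as the paper: the CDR-closure argument for $U$, Jensen's inequality applied to $\hat f(\nabla U(\bx))$, the identity $\hat f(\nabla f(\by)) = f(\by) - \ip{\nabla f(\by), \by}$, and the recognition of the combined integrand as an exact derivative. The only cosmetic difference is that you integrate $\frac{d}{dt}\brk{e^t \, t f(\bx/t)}$ directly in the variable $t$, whereas the paper first substitutes $u = 1/t$ and antidifferentiates $\frac{e^{1/u}}{u} g(u)$ --- the same antiderivative in different coordinates.
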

\begin{proof}
    We check the conditions of \Cref{def:good}. First, we clearly have $U(0) = 0$. Next, notice that $U(\bx)$ is a convex combination of functions of the form $t \cdot f(\bx/t)$. Each of these functions is a CDR-valuation, as it is a scaling of the function $f$. Since these properties are preserved under positive linear combinations, we have that $U$ is also a CDR-valuation.

    To verify the second property, we compute
    \begin{align*}
        \hat f(\nabla U(\bx)) &= \hat f\left(\frac{1}{e-1}\int_0^1 e^t \cdot \nabla f(\bx/t)dt\right)\\
        &\leq \frac{1}{e-1}\int_0^1 e^t \cdot \hat f(\nabla f(\bx/t))dt&&\text{by convexity of $\hat f$,}\\
        &=\frac{1}{e-1}\int_0^1 e^t \cdot \left(f(\bx/t) - \ip{\nabla f(\bx/t),~\bx/t}\right)dt&&\text{by property (2) of \Cref{lem:cdr-props}}
    \end{align*}
    Hence, we have
    \[
    U(\bx) + \hat f(\nabla U(\bx)) \leq \frac{1}{e-1}\int_0^1 e^t \cdot \Big((1+t)f(\bx/t) - \ip{\nabla f(\bx/t),~\bx/t}\Big)dt
    \]
    We seek to show that the RHS expression is at most $\frac{e}{e-1} \cdot f(x)$. To do so, consider the the function $g : \Rp \to \Rp$ given by $g(u) = f(u \bx)$. This means $g'(u) = \ip{\nabla f(u \bx),~\bx}$. Now the inequality we seek can be expressed in terms of $g$ as
    \[
    \int_0^1 e^t \left( (1 + t)g(\frac{1}{t}) - \frac{1}{t} g'(\frac{1}{t})\right) dt \leq e \cdot g(1).
    \]
    Verifying this inequality is simply a matter of calculus. By substituting $u = \frac{1}{t}$, we have
    \begin{align*}
        \int_0^1 e^t \left( (1 + t)g(\frac{1}{t}) - \frac{1}{t} g'(\frac{1}{t})\right) dt 
        &= \int_\infty^1 \left(-\frac{e^{1/u}}{u^2}(1+\frac{1}{u})g(u) + \frac{e^{1/u}}{u}g'(u)\right)du\\
        &=\int_\infty^1 \left(\frac{e^{1/u}}{u}g(u)\right)'du\\
        &= e \cdot g(1) - \lim_{u \to \infty} \frac{g(u)}{u} \leq e \cdot g(1).
    \end{align*}
\end{proof}

\appendix

\section{Missing Proof for \Cref{lem:cdr-examples}}
\label{apx:wl-claim}
Here, we prove the claim that $\frac{\partial f(\bx)}{\partial x_i} = \ind\{w_i\up \bx < 1\}$, where $f$ is defined in bullet (4) of \Cref{lem:cdr-examples}.

To show our claim, it suffices to show that $f(\bx) = L_r((\min\{1,~w_i\up \bx\})_{i \in [m]})$, where $L_r$ is the Lov\'asz extension of the submodular function $r$, i.e.
\[
L_r(\bb w) = \int_0^\infty r(\{i \in [m] : w_i \geq t\}) dt.
\]
This fact implies our claim due to the ``chain-rule'' Lemma 3.8 of \cite{hathcock2024online}, since if we define $G : \Rp \to \Rp$ by $G(x) = \min\{1, x\}$, we then have
\[
\frac{\partial f(\bx)}{\partial x_i} = \frac{\partial L_r((G(w_i\up \bx))_{i \in [m]})}{\partial x_i} = G'(w_i \up \bx) = \ind\{w_i\up \bx < 1\}.
\]
To show this fact, define the vector $\bz^* \in \Rp^m$ by
\[
z^*_i := \begin{cases}
    x_i & w_i\up \bx \leq 1\\
    \frac{x_i}{w_i \up \bx} & w_i\up \bx > 1.
\end{cases}
\]
It is not hard to check that $w\up{\bz^*}_i = \min\{1, w_i \up \bx\}$ for each $i$. Since $\bb w\up{\bz^*} \leq 1$, we have that $\bz^*$ is in the polymatroid defined by $r$, i.e. $\sum_{i \in S} z^*_i \leq r(S)$ for every $S \subseteq [m]$. Additionally, it is easy to see that $\bz^* \leq \bx$, so we have $f(\bx) \geq \sum_{i \in [m]} z^*_i = L_r(\bb w \up {z^*})$ by definition of $f$ and Proposition 3.6 of \cite{hathcock2024online}.

Next, notice that if $\bz \in \Rp^m$ is such that $\bz \leq \bx$ and $\sum_{i \in S} z_i \leq r(S)$ for every $S \subseteq [m]$ then we have (1) $\bb w\up \bz \leq \bb w \up \bx$ by monotonicity of $\bb w \up x$ in $\bx$, and (2) $\bb w \up \bz \leq 1$ by Proposition 3.5 of \cite{hathcock2024online}, since $\bz$ is feasible in the polymatroid given by $r$. Hence, we have $w_i \up \bz \leq \min\{1,~w_i\up \bx\}$ for each $i$, which gives $\bb w \up \bz \leq \bb w \up {\bz^*}$. Using monotonicity of $L_r$ and Proposition 3.6 of \cite{hathcock2024online} again, this implies $\sum_i z_i =  L_r(\bb w\up \bz) \leq L_r(\bb w\up {\bz^*})$.

Finally, as $f(\bx)$ is defined by the maximum over such $z$, we see that $f(\bx) \leq L_r(\bb w\up {\bz^*})$, and hence $f(\bx) = L_r(\bb w\up {\bz^*}) = L_r((\min\{1,~w_i\up \bx\})_{i \in [m]})$ as desired.

\bibliographystyle{alpha}
\bibliography{ref}

\newcommand{\etalchar}[1]{$^{#1}$}
\begin{thebibliography}{FMMM09}

\bibitem[ABC{\etalchar{+}}16]{azar2016online}
Yossi Azar, Niv Buchbinder, TH~Hubert Chan, Shahar Chen, Ilan~Reuven Cohen,
  Anupam Gupta, Zhiyi Huang, Ning Kang, Viswanath Nagarajan, Joseph Naor,
  et~al.
\newblock Online algorithms for covering and packing problems with convex
  objectives.
\newblock In {\em 2016 IEEE 57th Annual Symposium on Foundations of Computer
  Science (FOCS)}, pages 148--157. IEEE, 2016.

\bibitem[AGKM11]{aggarwal2011online}
Gagan Aggarwal, Gagan Goel, Chinmay Karande, and Aranyak Mehta.
\newblock Online vertex-weighted bipartite matching and single-bid budgeted
  allocations.
\newblock In {\em Proceedings of the twenty-second annual ACM-SIAM symposium on
  Discrete Algorithms}, pages 1253--1264. SIAM, 2011.

\bibitem[AS21]{albers2021optimal}
Susanne Albers and Sebastian Schubert.
\newblock Optimal algorithms for online b-matching with variable vertex
  capacities.
\newblock In {\em Approximation, Randomization, and Combinatorial Optimization.
  Algorithms and Techniques (APPROX/RANDOM 2021)}, pages 2--1. Schloss
  Dagstuhl--Leibniz-Zentrum f{\"u}r Informatik, 2021.

\bibitem[Bac10]{bach2010structured}
Francis Bach.
\newblock Structured sparsity-inducing norms through submodular functions.
\newblock {\em Advances in Neural Information Processing Systems}, 23, 2010.

\bibitem[Bac19]{bach2019submodular}
Francis Bach.
\newblock Submodular functions: from discrete to continuous domains.
\newblock {\em Mathematical Programming}, 175(1):419--459, 2019.

\bibitem[BC22]{blanc2022multiway}
Guy Blanc and Moses Charikar.
\newblock Multiway online correlated selection.
\newblock In {\em 2021 IEEE 62nd Annual Symposium on Foundations of Computer
  Science (FOCS)}, pages 1277--1284. IEEE, 2022.

\bibitem[BFFG19]{buchbinder2019online}
Niv Buchbinder, Moran Feldman, Yuval Filmus, and Mohit Garg.
\newblock Online submodular maximization: Beating 1/2 made simple.
\newblock In {\em International Conference on Integer Programming and
  Combinatorial Optimization}, pages 101--114. Springer, 2019.

\bibitem[BKM22]{barman2022universal}
Siddharth Barman, Arindam Khan, and Arnab Maiti.
\newblock Universal and tight online algorithms for generalized-mean welfare.
\newblock In {\em Proceedings of the AAAI Conference on Artificial
  Intelligence}, volume~36, pages 4793--4800, 2022.

\bibitem[BLKB17]{bian2017continuous}
An~Bian, Kfir Levy, Andreas Krause, and Joachim~M Buhmann.
\newblock Continuous dr-submodular maximization: Structure and algorithms.
\newblock {\em Advances in Neural Information Processing Systems}, 30, 2017.

\bibitem[BMBK17]{bian2017guaranteed}
Andrew~An Bian, Baharan Mirzasoleiman, Joachim Buhmann, and Andreas Krause.
\newblock Guaranteed non-convex optimization: Submodular maximization over
  continuous domains.
\newblock In {\em Artificial Intelligence and Statistics}, pages 111--120.
  PMLR, 2017.

\bibitem[BV04]{boyd2004convex}
Stephen~P Boyd and Lieven Vandenberghe.
\newblock {\em Convex optimization}.
\newblock Cambridge university press, 2004.

\bibitem[CHHK18]{chen2018projection}
Lin Chen, Christopher Harshaw, Hamed Hassani, and Amin Karbasi.
\newblock Projection-free online optimization with stochastic gradient: From
  convexity to submodularity.
\newblock In {\em International Conference on Machine Learning}, pages
  814--823. PMLR, 2018.

\bibitem[CHK18]{chen2018online}
Lin Chen, Hamed Hassani, and Amin Karbasi.
\newblock Online continuous submodular maximization.
\newblock In {\em International Conference on Artificial Intelligence and
  Statistics}, pages 1896--1905. PMLR, 2018.

\bibitem[DHK{\etalchar{+}}13]{devanur2013whole}
Nikhil~R Devanur, Zhiyi Huang, Nitish Korula, Vahab~S Mirrokni, and Qiqi Yan.
\newblock Whole-page optimization and submodular welfare maximization with
  online bidders.
\newblock In {\em Proceedings of the fourteenth ACM conference on Electronic
  commerce}, pages 305--322, 2013.

\bibitem[DJ12]{devanur2012online}
Nikhil~R Devanur and Kamal Jain.
\newblock Online matching with concave returns.
\newblock In {\em Proceedings of the forty-fourth annual ACM symposium on
  Theory of computing}, pages 137--144, 2012.

\bibitem[FHTZ20]{fahrbach2020edge}
Matthew Fahrbach, Zhiyi Huang, Runzhou Tao, and Morteza Zadimoghaddam.
\newblock Edge-weighted online bipartite matching.
\newblock In {\em 2020 IEEE 61st Annual Symposium on Foundations of Computer
  Science (FOCS)}, pages 412--423. IEEE, 2020.

\bibitem[FKMP09]{feldman2009online}
Jon Feldman, Nitish Korula, Vahab Mirrokni, and Martin P{\'a}l.
\newblock Online ad assignment with free disposal.
\newblock In {\em International workshop on internet and network economics},
  pages 374--385. Springer, 2009.

\bibitem[FMMM09]{feldman2009onlineB}
Jon Feldman, Aranyak Mehta, Vahab Mirrokni, and Shan Muthukrishnan.
\newblock Online stochastic matching: Beating 1-1/e.
\newblock In {\em 2009 50th Annual IEEE Symposium on Foundations of Computer
  Science}, pages 117--126. IEEE, 2009.

\bibitem[GHH{\etalchar{+}}22]{gao2022improved}
Ruiquan Gao, Zhongtian He, Zhiyi Huang, Zipei Nie, Bijun Yuan, and Yan Zhong.
\newblock Improved online correlated selection.
\newblock In {\em 2021 IEEE 62nd Annual Symposium on Foundations of Computer
  Science (FOCS)}, pages 1265--1276. IEEE, 2022.

\bibitem[HJP{\etalchar{+}}24]{hathcock2024online}
Daniel Hathcock, Billy Jin, Kalen Patton, Sherry Sarkar, and Michael Zlatin.
\newblock The online submodular assignment problem.
\newblock In {\em 2024 IEEE 65th Annual Symposium on Foundations of Computer
  Science (FOCS)}, pages 291--313. IEEE, 2024.

\bibitem[HLSW25]{huang2025long}
Zhiyi Huang, Chui~Shan Lee, Xinkai Shu, and Zhaozi Wang.
\newblock The long arm of nashian allocation in online p-mean welfare
  maximization.
\newblock In {\em 52nd International Colloquium on Automata, Languages, and
  Programming (ICALP 2025)}, pages 98--1. Schloss Dagstuhl--Leibniz-Zentrum
  f{\"u}r Informatik, 2025.

\bibitem[HSY22]{huang2022power}
Zhiyi Huang, Xinkai Shu, and Shuyi Yan.
\newblock The power of multiple choices in online stochastic matching.
\newblock In {\em Proceedings of the 54th Annual ACM SIGACT Symposium on Theory
  of Computing}, pages 91--103, 2022.

\bibitem[HTWZ19]{huang2019online}
Zhiyi Huang, Zhihao~Gavin Tang, Xiaowei Wu, and Yuhao Zhang.
\newblock Online vertex-weighted bipartite matching: Beating 1-1/e with random
  arrivals.
\newblock {\em ACM Transactions on Algorithms (TALG)}, 15(3):1--15, 2019.

\bibitem[HZZ20]{huang2020adwords}
Zhiyi Huang, Qiankun Zhang, and Yuhao Zhang.
\newblock Adwords in a panorama.
\newblock In {\em 2020 IEEE 61st Annual Symposium on Foundations of Computer
  Science (FOCS)}, pages 1416--1426. IEEE, 2020.

\bibitem[JL14]{jaillet2014online}
Patrick Jaillet and Xin Lu.
\newblock Online stochastic matching: New algorithms with better bounds.
\newblock {\em Mathematics of Operations Research}, 39(3):624--646, 2014.

\bibitem[KMS23]{kesselheim2023online}
Thomas Kesselheim, Marco Molinaro, and Sahil Singla.
\newblock Online and bandit algorithms beyond $\ell_p$ norms.
\newblock In {\em Proceedings of the 2023 Annual ACM-SIAM Symposium on Discrete
  Algorithms (SODA)}, pages 1566--1593. SIAM, 2023.

\bibitem[KMS24]{kesselheim2024supermodular}
Thomas Kesselheim, Marco Molinaro, and Sahil Singla.
\newblock Supermodular approximation of norms and applications.
\newblock In {\em Proceedings of the 56th Annual ACM Symposium on Theory of
  Computing}, pages 1841--1852, 2024.

\bibitem[KMZ15]{korula2015online}
Nitish Korula, Vahab Mirrokni, and Morteza Zadimoghaddam.
\newblock Online submodular welfare maximization: Greedy beats 1/2 in random
  order.
\newblock In {\em Proceedings of the forty-seventh annual ACM symposium on
  Theory of computing}, pages 889--898, 2015.

\bibitem[KPV13]{kapralov2013online}
Michael Kapralov, Ian Post, and Jan Vondr{\'a}k.
\newblock Online submodular welfare maximization: Greedy is optimal.
\newblock In {\em Proceedings of the twenty-fourth annual ACM-SIAM symposium on
  Discrete algorithms}, pages 1216--1225. SIAM, 2013.

\bibitem[KVV90]{karp1990optimal}
Richard~M Karp, Umesh~V Vazirani, and Vijay~V Vazirani.
\newblock An optimal algorithm for on-line bipartite matching.
\newblock In {\em Proceedings of the twenty-second annual ACM symposium on
  Theory of computing}, pages 352--358, 1990.

\bibitem[MSVV05]{mehta2005adwords}
Aranyak Mehta, Amin Saberi, Umesh Vazirani, and Vijay Vazirani.
\newblock Adwords and generalized on-line matching.
\newblock In {\em 46th Annual IEEE Symposium on Foundations of Computer Science
  (FOCS'05)}, pages 264--273. IEEE Computer Society, 2005.

\bibitem[NRW20]{niazadeh2020optimal}
Rad Niazadeh, Tim Roughgarden, and Joshua~R Wang.
\newblock Optimal algorithms for continuous non-monotone submodular and
  dr-submodular maximization.
\newblock {\em Journal of Machine Learning Research}, 21(125):1--31, 2020.

\bibitem[PRS23]{patton2023submodular}
Kalen Patton, Matteo Russo, and Sahil Singla.
\newblock Submodular norms with applications to online facility location and
  stochastic probing.
\newblock In {\em APPROX/RANDOM}, 2023.

\bibitem[SF20]{sadeghi2020online}
Omid Sadeghi and Maryam Fazel.
\newblock Online continuous dr-submodular maximization with long-term budget
  constraints.
\newblock In {\em International conference on artificial intelligence and
  statistics}, pages 4410--4419. PMLR, 2020.

\bibitem[WW16]{wang2016matroid}
Yajun Wang and Sam Chiu-wai Wong.
\newblock Matroid online bipartite matching and vertex cover.
\newblock In {\em Proceedings of the 2016 ACM Conference on Economics and
  Computation}, pages 437--454, 2016.

\bibitem[ZC20]{zhang2020combinatorial}
Hanrui Zhang and Vincent Conitzer.
\newblock Combinatorial ski rental and online bipartite matching.
\newblock In {\em Proceedings of the 21st ACM Conference on Economics and
  Computation}, pages 879--910, 2020.

\bibitem[ZCHK19]{zhang2019online}
Mingrui Zhang, Lin Chen, Hamed Hassani, and Amin Karbasi.
\newblock Online continuous submodular maximization: From full-information to
  bandit feedback.
\newblock {\em Advances in Neural Information Processing Systems}, 32, 2019.

\end{thebibliography}

\end{document}